\theoremstyle{remark}
\newtheorem{theorem}{\hskip 1em Theorem}
\newtheorem{proposition}{\hskip 1em Proposition}
\newtheorem{lemma}{\hskip 1em Lemma}
\newtheorem{assumption}{\hskip 1em Assumption}
\newtheorem{property}{\hskip 1em Property}
\newtheorem{proof skecth}{Proof skecth}
\def\BibTeX{{\rm B\kern-.05em{\sc i\kern-.025em b}\kern-.08em
		T\kern-.1667em\lower.7ex\hbox{E}\kern-.125emX}}
\begin{document}	
	\title{Performance Analysis of Perturbation-enhanced SC decoders}
	\author{\IEEEmembership{$\mathrm{Zhicheng\ Liu}$, \and $\mathrm{Liuquan\ Yao}$, \and $\mathrm{Shuai\ Yuan}$, \and $\mathrm{Guiying\ Yan}$, \and $\mathrm{Zhiming\ Ma, \ and}$ \and $\mathrm{Yuting\ Liu}$}
		\thanks{This work was supported by the National Key R\&D Program of China under Grant 2023YFA1009601. (\textit{Corresponding author: Yuting Liu.})}
		\thanks{Zhicheng Liu and Yuting Liu are with the School of Mathematics and Statistics, Beijing Jiaotong University, Beijing, 100044, China (e-mail: 20118002@bjtu.edu.cn, ytliu@bjtu.edu.cn).}
		
		
		\thanks{Liuquan Yao, Shuai Yuan, Guiying Yan and Zhiming Ma are with the University of Chinese Academy of Sciences, Academy of Mathematics and Systems Science, CAS, Beijing 100190, China (e-mail: yaoliuquan20@mails.ucas.ac.cn, yuanshuai2020@amss.ac.cn, yangy@amss.ac.cn, mazm@amt.ac.cn).}
		
		}
	
	
	\maketitle
	
	\begin{abstract}
		In this paper, we analyze the delay probability of the first error position in perturbation-enhanced Successive cancellation (SC) decoding for polar codes. Our findings reveal 
		that, asymptotically, an SC decoder's performance does not degrade after one perturbation, and it improves with a probability of $\frac{1}{2}$. 
		This analysis explains the sustained performance gains of perturbation-enhanced SC decoding as code length increases.
	\end{abstract}
	
	\begin{IEEEkeywords}
		Polar codes, perturbation-enhanced SC decoding, first error position, delay probability.
	\end{IEEEkeywords}
	
	\section{Introduction}
	\IEEEPARstart{P}{olar} codes, introduced by Arıkan \cite{ref1}, are the first capacity-achieving codes and have significantly impacted information and coding theory. In recent years, both academic and industrial interest has grown, with polar codes being adopted for control channels in the 5G eMBB scenario \cite{ref2}, driving further research into decoding algorithms.
	
%
    Successive cancellation (SC) decoding, pioneered by Arikan \cite{ref1}, enjoys extreme simplicity. For finite code lengths, enhanced decoding methods, such as log-likelihood ratio-based successive cancellation list (LLR-based SCL) and cyclic redundancy check-aided SCL (CA-SCL) algorithms \cite{ref3}, \cite{ref4}, \cite{ref5} can further improve performance. 

	Other performance-enhancing methods, such as flipping \cite{ref6} and automorphism ensemble decoding \cite{ref7}, \cite{ref8}, \cite{ref9}, show potential but face challenges. The efficiency of SCL-flipping decoding decreases with code length $N$ as unreliable bits grow \cite{ref10}. Additionally, Urbanke et al. \cite{ref11} show that the useful automorphism group size (SC-variant permutations) decreases asymptotically for polar codes constructed using Arikan’s method, reducing their efficiency for longer codes.  
	The $y$-side perturbation-enhanced decoding method, introduced in \cite{ref12} and shown to improve SCL decoders with increasing code length \cite{ref13}, enhances SC/SCL decoders by adding small perturbations to the received LLRs upon CRC failure. This simple and effective method yields noticeable performance gains across various code lengths and rates \cite{ref13}. However, the reasons for these improvements are unclear. To simplify analysis, we use SC decoders as a reference.

	In this paper, we offer theoretical insights into perturbation-enhanced SC decoding with the following contributions:
	\begin{enumerate}
		\item We introduce a $u$-side (hard-decision side) perturbation-enhanced SC decoder, where perturbations are directly applied to the decision LLRs (the LLRs used for hard decision) during the SC decoding process.
		\item We prove that, under reasonable assumptions, the first error position (the first bit where an SC decoder fails) will asymptotically either remain the same or be delayed with a probability of $\frac{1}{2}$, implying that perturbation preserves or enhances the performance of SC decoder.
	\end{enumerate}
	
	The remainder of this paper is organized as follows: Section II covers the preliminaries of polar codes and SC/SCL-based decoding algorithms. The delay probability of the first error position is derived in Section III. Section IV presents some simulation results. Finally, Section V concludes the paper.
	%
	
	\textit{Notation Conventions:} This paper defines the probability distribution of $X$ via its probability density function (PDF) $P_X(x)$. Here, $\mathbb{P}(\cdot)$ denotes probability, $\mathbb{E}(\cdot)$ is expectation, $\mathbb{I}_A$ is the indicator function for set $A$ and $\mathrm{sgn}(\cdot)$ is the sign function. The complementary cumulative distribution function (CDF) of the standard normal distribution is $\mathrm{Q}(x)=\int_x^{+\infty} e^{-\frac{t^2}{2}} \mathrm{~d} t$. The partial derivative of $f(x, y)$ with respect to $x$ is $\frac{\partial f(x, y)}{\partial x}$, and boldface denotes matrices and vectors. 
	\section{Preliminaries}
	\subsection{Construction of Polar Codes}
	Polar codes of length $N=2^{n}$ are constructed as follows:
	\begin{flalign*}
		\mathbf{c}_{0:N}=\mathbf{u}_{0:N}\mathbf{G}_{N},
	\end{flalign*}
	where $\mathbf{c}_{0:N}=[c_{0},\cdots,c_{N-1}]$ are the encoded codewords, $\mathbf{u}_{0:N}=[u_{0},\cdots,u_{N-1}]$ are the source bits, and \( \mathbf{G}_{N} = \mathbf{B}_{n} \mathbf{F}^{\otimes n} \), where $\mathbf{F}^{\otimes n}$ denotes the $n$-th Kronecker product of \( \begin{bmatrix}
		1 & 0 \\
		1 & 1
	\end{bmatrix} \), and \( \mathbf{B}_{n} \) is the bit-reversal permutation matrix \cite{ref14}.
	
	Channel polarization, achieved through recursive polarization transformations, creates either noiseless or noisy channels. For encoding, the $K$ most reliable indices in $[0,\cdots,N-1]$ are used for information bits (denoted as $\mathscr{A}$), while the remaining $N-K$ indices are assigned all-zero frozen bits.
	
	
	The encoded bits $\mathbf{c}_{0:N}$ are modulated using binary phase shift keying (BPSK) as \(x_{i} = 1 - 2c_{i}\) and transmitted over  additive white Gaussian noise (AWGN) channels. The received symbols are given by \(y_{i} = x_{i} + n_{i}\) and $\mathbf{y}_{0:N}=[y_{0},\cdots,y_{N-1}]$, where \(n_{i} \sim \mathscr{N}(0, \sigma^2)\) represents noise with variance \(\sigma^2\).
	%
	%
	\subsection{SC/SCL decoding of Polar Codes}
	Denote $ L^{(i)}_{1} = \frac{2y_i}{\sigma^2} $ the received LLR value of bit $i$. For a length-$2$ polar code $\mathbf{u}_{0:2}$, SC decoding proceeds as follows:
	
	The $f$-operation calculates the LLR value of bit $u_0$ as:
	\begin{flalign*}
		L(u_0)=f(L^{(0)}_{1},L_{1}^{(1)})=\mathrm{sgn}(L_{1}^{(0)}, L_{1}^{(1)}) \min\{|L_{1}^{(0)}|, |L_{1}^{(1)}|\}.
	\end{flalign*}
	
	Then, a hard decision on $ L(u_0) $ yields the estimate $\hat{u}_0 $.
	
	The $ g $-operation updates the LLR value of $ u_1 $:
	\begin{flalign*}
		L(u_1) = g(L_{1}^{(0)},L_{1}^{(1)},\hat{u}_{0})=(-1)^{\hat{u}_{0}}L_{1}^{(0)} + L_{1}^{(1)}.
	\end{flalign*}
	
	Finally, 
	a hard decision on $ L(u_1) $ produces the estimate $ \hat{u}_1 $.
	
	By recursively applying the \( f \)- and \( g \)-functions \( n \) times, SC decoding produces \(\hat{\mathbf{u}}_{0:N} = [\hat{u}_0, \cdots, \hat{u}_{N-1}]\) for a length-\(2^n \) polar code with $\hat{u}_{k}=u_{k}$ for all $k\in\mathscr{A}^{c}$.
	
	SCL decoding extends SC decoding by maintaining a list of decoding paths. The $ L $ most reliable paths are kept after each path split at an information bit position \cite{ref4}.
	\subsection{Perturbation-enhanced SC/SCL decoding of Polar Codes}
	To facilitate future discussions, we describe the $y$-side perturbation-enhanced algorithm \cite{ref13}: when the CRC check fails, each bit of \( \mathbf{L}^{(0:N)}_{1}=\frac{2\mathbf{y}_{0:N}}{\sigma^{2}} \) is perturbed by adding an independent noise \( n_{p_{i}} \sim \mathscr{N}(0, \sigma_p^2) \). This process repeats until a codeword \( \hat{\mathbf{u}}_{0:N} \) passes the CRC check or \( T \) attempts are reached, at which point \( \hat{\mathbf{u}}_{0:N} \) is returned as the final result. This method greatly improves the performance of long polar codes \cite{ref13}, proving effective in practice.
	\begin{algorithm}
	\caption{$y$-side perturbation-enhanced SC/SCL decoding} 
	\begin{algorithmic}[1] 
	\Require Original received LLRs $\mathbf{L}_{1}^{(0:N)}$, perturbation power $\sigma_{p}^{2}$, maximum attempts $T$, information set $\mathscr{A}$
	\Ensure The decoded codewords $\hat{\mathbf{u}}_{0:N}$		
	\State $\hat{\mathbf{u}}_{0:N}$ $\gets$ SCLDecoder($\mathbf{L}^{(0:N)}_{1}$,$\mathscr{A}$);		
	\If{CRCCheck$(\hat{\mathbf{u}}_{0:N})=\text{fail}$}
	\For{$t=1$ to $T$ }	    
	\State $\mathbf{L}^{(0:N)}_{p} \gets \mathbf{L}^{(0:N)}_{1}+\mathbf{n}_{p}$ with $\mathbf{n}_{p}=[n_{p_{0}},\cdots,n_{p_{N-1}}]$ and $\mathbf{L}^{(0:N)}_{p}=[\frac{2y_{0}}{\sigma^{2}}+n_{p_{0}},\cdots,\frac{2y_{N-1}}{\sigma^{2}}+n_{p_{N-1}}]$;
	\State $\hat{\mathbf{u}}_{0:N}$ $\gets$ SCLDecoder($\mathbf{L}^{(0:N)}_{p}$,$\mathscr{A}$);	\If{CRCCheck$(\hat{\mathbf{u}}_{0:N})=\text{success}$}	return $\hat{\mathbf{u}}_{0:N}$;
	\EndIf
	\EndFor
	\EndIf
	\end{algorithmic}
	\end{algorithm}
	\section{Delay Probability of the First Error Position in Perturbation-Enhanced SC Decoders}
	This section begins with approximating the impact of perturbation on decision LLRs in part A, and in part B, we show that the perturbation either maintains or improves the SC decoder's performance, each with a probability of $\frac{1}{2}$ (Theorem \ref{thm1}).
	
	Let \( L_N^{(i)} (i=0, \ldots, N-1) \) (or simply \( L_i \)) be the hard-decision side LLR of bit \( i \) in the SC algorithm (or the \( u \)-side LLR), computed by an SC decoder. The mean of \( L_N^{(i)} \) is denoted as \( \mu_N^{(i)} \) or \( \mu_i \) in short. The following assumption reasonably estimates the distribution of $\{L_{i}\big| \hat{\mathbf{u}}_{0:i}=\mathbf{u}_{0:i}\}$ \cite{ref14}.
	\begin{assumption}\label{asm1} (Gaussian Approximation (GA))  Assume the LLRs of each subchannel follow a Gaussian distribution with mean of half the variance \cite{ref14} and $\hat{\mathbf{u}}_{0:i}=\mathbf{u}_{0:i}$ (i.e., $\hat{u}_{0}=u_{0},\cdots,\hat{u}_{i-1}=u_{i-1}$) when decoding $u_i$, we obtain:
		\begin{flalign*}
			\{L_{i}\big| \hat{\mathbf{u}}_{0:i}=\mathbf{u}_{0:i}\}\sim\mathscr{N}(\mu_N^{(i)},2\mu_N^{(i)}),
		\end{flalign*}
		where $\mu_N^{(i)}$ (abbreviated as $\mu_{i}$) is recursively given as follows:
		\begin{flalign*}
			\mu^{(i)}_{N}=
				\begin{cases}
					\phi^{-1}(1-(1-\phi(\mu^{(\frac{i}{2})}_{\frac{N}{2}}))^{2}), &\text{if}\ i\in \{0,2,\cdots,N-2\}, \\
					2 \mu_{\frac{N}{2}}^{(\frac{i-1}{2})}, &\text{if}\ i\in\{1,3,\cdots,N-1\},
			\end{cases}
		\end{flalign*}
		with
		\begin{flalign*}
			\phi(x)\triangleq 1-\int_{-\infty}^{\infty}\frac{1}{\sqrt{4\pi x}}\tanh(\frac{t}{2})e^{-\frac{(t-x)^{2}}{4x}} \mathrm{d}t.
		\end{flalign*}
		
		Here, $\mu_{1}^{(i)}=\frac{2}{\sigma^{2}}$ for each $i$ within $\{0,1,\cdots,N-1\}$ \cite{ref14}.
	\end{assumption}
	%
	\subsection{Approximating perturbations on the decision LLRs}
	In this section, we present a $u$-side perturbation method (Algorithm 2) that applies approximate perturbations to the decision LLRs $\mathbf{L}_{0:N}=[L_{0},\cdots,L_{N-1}]$ as detailed below.
	\begin{proposition}\label{prop1}
		Perturbations applied on the received LLRs $\mathbf{L}^{(0:N)}_{1}$ can be approximately viewed as adding perturbation noise $n_{u_{i}}$ to $L_{i}$ ($i=0, \cdots, N-1$), characterized by:
		\begin{flalign}
			n_{u_{i}}\sim \mathscr{N}(0,\sigma^{2}_{i}),\label{eq5}
		\end{flalign}
		where $\sigma_{i}^{2}=2^{k_{i}}\sigma_{p}^{2}$ with $k_{i}$ the number of $g$ operations performed during SC decoding up to bit $u_{i}$. 
	\end{proposition}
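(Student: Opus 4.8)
To establish Proposition \ref{prop1} I would track how an independent perturbation $n_{p_i}\sim\mathscr{N}(0,\sigma_p^2)$ injected at each channel LLR $L_1^{(i)}$ propagates through the $f$- and $g$-operations of the SC schedule, working throughout under Assumption \ref{asm1} and in the small-perturbation regime, and then run an induction on $n$ for a length-$2^n$ code parametrized by the perturbation power. The two elementary propagation rules I would isolate are: (i) an $f$-operation \emph{preserves} the perturbation power, and (ii) a $g$-operation \emph{doubles} it.

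For rule (i), when $a\mapsto a+n_a$ and $b\mapsto b+n_b$ with $n_a,n_b$ small, neither the signs of $a,b$ nor the ordering of $|a|$ and $|b|$ changes, so $f(a+n_a,b+n_b)=f(a,b)\pm n_{\star}$, where $n_{\star}$ is the perturbation of whichever operand has the smaller magnitude; since $n_{\star}$ is a centered Gaussian and the extra sign only flips it, the output perturbation is again centered Gaussian with the \emph{same} variance (here I use that both operands entering a stage carry a common variance). For rule (ii) the computation is exact and linear: $g(a+n_a,b+n_b,\hat{u})=g(a,b,\hat{u})+\bigl((-1)^{\hat{u}}n_a+n_b\bigr)$, so if $n_a,n_b$ are independent with common variance $s^2$ the output perturbation is $\mathscr{N}(0,2s^2)$; the random sign $(-1)^{\hat{u}}$, even if it stems from an erroneous hard decision, is harmless since it preserves a zero-mean Gaussian.

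To invoke rule (ii) I need independence of the two operands entering each $g$-operation. Here I would use the structural fact that the two LLRs fed into any single $f$- or $g$-operation of the SC recursion are computed from disjoint subsets of the channel LLRs (at every node the LLR ``cones'' partition the node's channel set); together with rule (i), under which an $f$-output carries the perturbation of only one of its two inputs, this makes the perturbations of all LLRs produced at any given stage mutually independent, zero-mean, and of a common variance. The induction then runs cleanly. Base case $N=2$: $L_0=f(L_1^{(0)},L_1^{(1)})$ has perturbation variance $\sigma_p^2$ ($k_0=0$) and $L_1=g(L_1^{(0)},L_1^{(1)},\hat{u}_0)$ has variance $2\sigma_p^2$ ($k_1=1$). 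Inductive step: the first decoding stage turns the i.i.d.\ $\mathscr{N}(0,\rho^2)$ channel perturbations into $f$-branch LLRs carrying i.i.d.\ $\mathscr{N}(0,\rho^2)$ perturbations and $g$-branch LLRs carrying i.i.d.\ $\mathscr{N}(0,2\rho^2)$ perturbations; applying the hypothesis to the two length-$N/2$ subcodes with perturbation powers $\rho^2$ and $2\rho^2$, and noting that the first-stage $f$ contributes $0$ and the first-stage $g$ contributes $1$ to the $g$-count $k_i$ on the computation path of $L_i$, yields $n_{u_i}\sim\mathscr{N}(0,2^{k_i}\rho^2)$; taking $\rho^2=\sigma_p^2$ gives the claim.

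The main obstacle — and the reason the proposition is only approximate — is rule (i): $f$ is nonlinear, so the propagated perturbation is not exactly Gaussian and its variance is only approximately preserved; the min-selection description is the first-order approximation, valid when $|n_a|,|n_b|$ are small relative to $\bigl||a|-|b|\bigr|$ and to $\min\{|a|,|b|\}$, which is precisely the modeling assumption being made (consistent with the GA viewpoint of Assumption \ref{asm1}). A secondary caveat I would flag is that the noises $\{n_{u_i}\}_i$ are not jointly independent across $i$ in reality — distinct decision LLRs can share channel perturbations through the $g$-branches — although their marginals are as stated in \eqref{eq5}, which is all that Algorithm 2 and the subsequent delay-probability analysis use.
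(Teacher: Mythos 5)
Your proposal is correct and follows essentially the same route as the paper: the paper also reduces everything to the two rules that an $f$-operation preserves the perturbation variance (via a first-order Taylor expansion of $f$, whose partial derivatives are exactly your ``sign and min-selection unchanged'' observation) and a $g$-operation doubles it, then counts the $g$-operations along the path to $u_i$. Your explicit induction on the code length and the disjoint-cone independence argument are just a more formal packaging of the paper's layer-by-layer i.i.d.\ assumption, and your caveats (non-Gaussianity from the nonlinear $f$, lack of joint independence across $i$) are consistent with the paper treating the result as an approximation.
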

	\begin{proof}
		Consider a polar code with layers numbered from right (layer \(0\)) to left (layer \(n\)). In each layer \(k\), the input LLRs for the upper and lower polarized subchannels are \(l_k^{+}\) and \(l_k^{-}\). The perturbation noise are \(n_{k}^{+}\) for the upper and \(n_{k}^{-}\) for the lower, resulting in perturbed LLRs \(\tilde{l}_k^{+} = l_k^{+} + n_{k}^{+}\) and \(\tilde{l}_k^{-} = l_k^{-} + n_{k}^{-}\).
		
%
		
		According to the polarization rule, \( l_k^{+} \) and \( l_k^{-} \) are independent and identically distributed (i.i.d.) from \(\mathscr{N}(\mu_{(k)}, 2 \mu_{(k)})\), where \(\mu_{(k)} = \mathbb{E}(l_k^{+})\). Similarly, \(n_{k}^{+}\) and \(n_{k}^{-}\) are i.i.d. from \(\mathscr{N}(0, \sigma_{(k)}^{2})\) with variance \(\sigma_{(k)}^{2}\), and are independent of \(l_k^{+}\) and \(l_k^{-}\). Specifically, \(n_{0}^{+}\) and \(n_{0}^{-}\) are i.i.d. from \(\mathscr{N}(0, \sigma_p^2)\), while \(l_0^{+}\) and \(l_0^{-}\) are i.i.d. from \(\mathscr{N}(\frac{2}{\sigma^2}, \frac{4}{\sigma^2})\) \cite{ref13, ref14}.

		To prove Proposition \ref{prop1}, it is essential to examine the relationships between $ \tilde{l}_{k+1}^{+} $ and $ l_{k+1}^{+} $, as well as $ \tilde{l}_{k+1}^{-} $ and $ l_{k+1}^{-}$.
		Using the multivariable Taylor expansion \cite{ref15}, we obtain:
		\begin{flalign*}
			&\tilde{l}_{k+1}^{+}=f\left(\tilde{l}_{k}^{+}, \tilde{l}_{k}^{-}\right)=f\left(l_k^{+}+n^{+}_{k}, l_k^{-}+n^{-}_{k}\right)\\
			&\approx f\left(l_k^{+},l_k^{-}\right)+\frac{\partial f\left(l_k^{+}, l_k^{-}\right)}{\partial l_k^{+}} n^{+}_{k}+\frac{\partial f\left(l_k^{+}, l_k^{-}\right)}{\partial l_k^{-}} n^{-}_{k}\\
			&=l_{k+1}^{+}+\mathrm{sgn}(l_k^{-})\mathbb{I}_{|l_k^{+}|\leq|l_k^{-}|}n^{+}_{k}+\mathrm{sgn}(l_k^{+})\mathbb{I}_{|l_k^{-}|\leq|l_k^{+}|}n^{-}_{k}.
		\end{flalign*}
		
		Since \( n_{k}^{+} \) and \( n_{k}^{-} \) are i.i.d. from \(\mathscr{N}(0, \sigma_{(k)}^{2})\), we have:
		\begin{flalign*}
			&\mathrm{sgn}(l_k^{-})\mathbb{I}_{|l_k^{+}|\leq|l_k^{-}|}n^{+}_{k}+\mathrm{sgn}(l_k^{-})\mathbb{I}_{|l_k^{-}|\leq|l_k^{+}|}n^{-}_{k}\\
			&\sim\mathscr{N}(0,(\mathbb{I}_{|l_{k}^{-}|\leq|l_{k}^{+}|}+\mathbb{I}_{|l_{k}^{+}|\leq|l_{k}^{-}|})\sigma_{(k)}^{2})=\mathscr{N}(0,\sigma_{(k)}^{2}),
		\end{flalign*}
		where $\mathbb{P}(|l_{k}^{+}|=|l_{k}^{-}|)=0$ yields the last equality.
		
		Moreover, based on Assumption \ref{asm1} and the definitions of \( \tilde{l}_{k}^{+}, \tilde{l}_{k}^{-}, n^{+}_{k} \), and \( n^{-}_{k} \), we can state the following:
		\begin{flalign*}
			\tilde{l}_{k+1}^{-}=g(\tilde{l}_{k}^{+},\tilde{l}_{k}^{-},\hat{u}^{+})=l_{k+1}^{-}+(-1)^{\hat{u}^{+}}n^{+}_{k}+n^{-}_{k},
		\end{flalign*}
		where $\hat{u}^{+}=h(\tilde{l}_{k+1}^{+})$ and  $(-1)^{\hat{u}^{+}}n^{+}_{k}+n^{-}_{k}\sim\mathscr{N}(0,2\sigma_{(k)}^{2})$.
		
		
        The preceding analysis reveals that when moving perturbation noise from layer $ k $ to layer $ k+1 $, the variance stays the same after an $ f $ operation but doubles after a $ g $ operation.

		Therefore, Proposition \ref{prop1} has been demonstrated.
	\end{proof}
	Proposition \ref{prop1} supports Algorithm 2 ($u$-side perturbation-enhanced SC decoding), showing how perturbations affect decoding by altering decision LLRs $\mathbf{L}_{0:N}$, obtained through $n$ iterations of $f$ and $g$ operations on the received LLRs.
	\begin{algorithm}
		\caption{$u$-side perturbation-enhanced SC decoding} 
		\begin{algorithmic}[1]
			\Require Original received LLRs $\mathbf{L}_{1}^{(0:N)}=\frac{2\mathbf{y}_{0:N}}{\sigma^{2}}$, perturbation power $\sigma_{p}^{2}$, maximum attempts $T$, information set $\mathscr{A}$
			\Ensure The decoded codewords $\hat{\mathbf{u}}_{0:N}$
			\State $\mathbf{L}_{0:N}$ $\gets$ $\mathbf{L}_{1}^{(0:N)}$, $\{\sigma^{2}_{i}\}_{i=0}^{N-1} \gets \sigma_{p}^{2}$;
			\State $\hat{\mathbf{u}}_{0:N}$ $\gets$ $h(\mathbf{L}_{0:N})$ with $h(\cdot)$ the hard decision function;
			\If{CRCCheck$(\hat{\mathbf{u}}_{0:N})=\text{fail}$}
			\For{$t=1$ to $T$ }	    
			\State $\mathbf{L}'_{0:N} \gets \mathbf{L}_{0:N}+\mathbf{n}_{u}$ with $\mathbf{n}_{u}=[n_{u_{0}},\cdots,n_{u_{N-1}}]$ and $\mathbf{L}'_{0:N}=[L_{0}+n_{u_{0}},\cdots,L_{N-1}+n_{u_{N-1}}]$;  
			\State $\hat{\mathbf{u}}_{0:N}$ $\gets$ $h(\mathbf{L}'_{0:N})$;		
			\If{CRCCheck$(\hat{\mathbf{u}}_{0:N})=\text{success}$} return $\hat{\mathbf{u}}_{0:N}$;	
			\EndIf
			\EndFor
			\EndIf
		\end{algorithmic}
	\end{algorithm}
	\subsection{The delay probability of the first error position for $u$-side perturbation-enhanced SC decoders}
	In this section, using Lemma \ref{lem1} (upper bound on unchanged probability of the first error position) and Lemma \ref{lem2} (probability that the perturbed decoder is correct given the initial one is correct), we show that perturbation does not degrade SC decoder's performance and improves it with probability \(\frac{1}{2}\). 
	
	Assume that the all-zero codeword $\mathbf{0}_{N}$ is transmitted \cite{ref14}, and use $\mathbf{0}$ to represent an all-zero vector of any length. Let $\hat{\mathbf{u}}_{0:N}^{(0)}$ and $\hat{\mathbf{u}}_{0:N}^{(1)}$ be the decoded results of the SC decoder and the $u$-side perturbation-enhanced SC decoder (Algorithm 2 with $T=1$) as described in Section II-B, respectively. Define $\tau(0)$ and $\tau(1)$ as their first error positions. Then, we have: 
	\begin{flalign*}
		&\{\tau(0)=i\}=\{\hat{u}^{(0)}_{i}=1,\hat{\mathbf{u}}^{(0)}_{0:i}=\mathbf{0}\},\\
		&\{\tau(1)=i\}=\{\hat{u}^{(1)}_{i}=1,\hat{\mathbf{u}}^{(1)}_{0:i}=\mathbf{0}\}.
	\end{flalign*}
	
	This section examines the asymptotic behavior of delay probability (\ref{eqi}), unchanged probability (\ref{eqii}), and advance probability (\ref{eqiii}), given an SC decoding failure \(\{\tau(0)<N\}\). 
	\begin{flalign}
		&\mathbb{P}(\tau(1)>\tau(0)\big|\tau(0)<N),\label{eqi}\\
		&\mathbb{P}(\tau(1)=\tau(0)\big|\tau(0)<N),\label{eqii}\\
		&\mathbb{P}(\tau(1)<\tau(0)\big|\tau(0)<N).\label{eqiii}
	\end{flalign}
	
	
	To avoid performance degradation, it is crucial that (\ref{eqiii}) diminishes as \(N \rightarrow +\infty\), because an earlier first error position implies incorrect decoding of earlier bits.

	Let's start with some key findings. Assuming \(\sigma_{p}^{2} \approx N^{-\alpha}\) with \(\alpha \in (1-2\beta, \beta)\) for some \(\beta \in \left(\frac{1}{3}, \frac{1}{2}\right)\), we focus on an information set \(\mathscr{A} \subset \left\{ i \big| Z\left(W_N^{(i)}\right) < 2^{-N^\beta} \right\}\) for clarity, This implies $\mu_i \geq N^{\beta}, \forall i \in \mathscr{A}$, where \(Z(W)\) is the Bhattacharyya parameter of \(W\) and \(W_{N}^{(i)}\) is the \(i\)-th bit-channel \cite{ref1}.
	
	The proposition below offers a lower bound of $\sigma_{i}^{2}$.
	\begin{proposition}\label{prop2}
		For any $\gamma<\beta$ and $\sigma_{p}^{2}\approx N^{-\alpha}$, we find:
		\begin{flalign}
			\sigma_{i}^{2}\geq N^{\gamma-\alpha}, \forall i \in \mathscr{A}.
		\end{flalign}
	\end{proposition}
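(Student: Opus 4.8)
The goal is to convert the hypothesis $\mu_i \ge N^\beta$ (recorded just before the proposition for every $i \in \mathscr{A}$) into a lower bound on the number $k_i$ of $g$-operations along the SC decoding path to bit $u_i$, and then read off $\sigma_i^2$ from Proposition~\ref{prop1}. Since $\sigma_i^2 = 2^{k_i}\sigma_p^2$ and $\sigma_p^2 \approx N^{-\alpha}$, it suffices to prove that $k_i \ge \gamma n$ for every $i \in \mathscr{A}$ once $N = 2^n$ is large: then $\sigma_i^2 = 2^{k_i}\sigma_p^2 \ge 2^{\gamma n}\sigma_p^2 = N^\gamma\sigma_p^2 \approx N^{\gamma-\alpha}$, which is exactly the claim.

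First I would unwind the GA recursion of Assumption~\ref{asm1}, which is precisely the density-evolution tracking of the SC decoder, so the $g$-operations counted by $k_i$ coincide with the doubling branches $\mu_N^{(i)} = 2\mu_{N/2}^{((i-1)/2)}$ taken along the depth-$n$ path from the root down to bit-channel $i$. At a doubling branch the mean is multiplied by exactly $2$; at the other ($f$) branch it is replaced by $\phi^{-1}\!\bigl(1-(1-\phi(\mu))^2\bigr)$. The key elementary lemma is that the $f$-map is nondecreasing and contracts the mean, i.e. $\phi^{-1}\!\bigl(1-(1-\phi(x))^2\bigr) \le x$: writing $1-(1-\phi(x))^2 = \phi(x)\bigl(2-\phi(x)\bigr)$ and using $\phi(x)\in(0,1]$ gives $1-(1-\phi(x))^2 \ge \phi(x)$, and since $\phi$ (hence $\phi^{-1}$) is decreasing, applying $\phi^{-1}$ reverses this to $\le x$. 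Iterating the two monotone operations from the base value $\mu_1 = 2/\sigma^2$ then yields $\mu_i \le 2^{k_i}\mu_1 = 2^{k_i+1}/\sigma^2$.

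Combining with $\mu_i \ge N^\beta$ gives $2^{k_i} \ge \tfrac{\sigma^2}{2}N^\beta$, i.e. $k_i \ge \beta n - \log_2(2/\sigma^2)$. Because $\gamma < \beta$ strictly and $\log_2(2/\sigma^2)$ is a fixed constant (depending only on the SNR), $\beta n - \log_2(2/\sigma^2) \ge \gamma n$ holds for all sufficiently large $n$, so $k_i \ge \gamma n$ and the chain of inequalities in the first paragraph closes.

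The one place that needs genuine care, rather than routine manipulation, is the monotonicity/non-expansiveness of the $f$-map: one must verify that $\phi:[0,\infty)\to(0,1]$ is a decreasing bijection (so $\phi^{-1}$ is well defined and decreasing), and that, because the $f$-map is nondecreasing, composing $f$- and $g$-branches preserves the bound $\mu \le 2^{(\#g)}\mu_1$ — that is, overestimating the input fed into an $f$-branch still overestimates its output, so dropping the $f$-branches can only increase the final value. The remaining ingredients — the inequality $1-(1-\phi(x))^2 \ge \phi(x)$, the identification of $k_i$ with the number of doubling branches in the recursion for $\mu_i$, and the final arithmetic with $\sigma_p^2 \approx N^{-\alpha}$ — are all straightforward.
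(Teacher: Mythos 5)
Your proof is correct, but it takes a genuinely different route from the paper's. The paper never touches the GA means: it works with the Bhattacharyya recursion, $Z$ squared under a $g$-operation and $Z$ non-decreasing under an $f$-operation, so any index with $k_i < n\gamma$ satisfies $Z\left(W_N^{(i)}\right) \geq Z_0^{2^{n\gamma}} > 2^{-N^{\beta}}$ for sufficiently large $N$ (because $\gamma<\beta$), hence lies outside $\mathscr{A}$; this gives $\mathscr{A}\subset\{i: k_i\geq n\gamma\}$ straight from the defining property $\mathscr{A}\subset\{i: Z(W_N^{(i)})<2^{-N^{\beta}}\}$, and then $\sigma_i^2=2^{k_i}\sigma_p^2\geq 2^{n\gamma}\sigma_p^2\approx N^{\gamma-\alpha}$. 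You instead lower-bound $k_i$ through the GA mean recursion: the contraction $\phi^{-1}\bigl(1-(1-\phi(x))^2\bigr)\leq x$ (valid because $z(2-z)\geq z$ on $(0,1]$ and $\phi$ is decreasing) gives $\mu_i\leq 2^{k_i}\mu_1 = 2^{k_i+1}/\sigma^2$, which combined with the stated consequence $\mu_i\geq N^{\beta}$ yields $k_i\geq \beta n-\log_2(2/\sigma^2)\geq \gamma n$ for large $n$, and the conclusion follows identically. The trade-off: the paper's argument rests only on exact, assumption-free properties of the Bhattacharyya parameter, whereas yours leans on Assumption \ref{asm1} and on the unproved bridge from $Z(W_N^{(i)})<2^{-N^{\beta}}$ to $\mu_i\geq N^{\beta}$ (which the paper merely asserts, essentially via $Z\approx e^{-\mu/4}$ under GA); in exchange, your route makes explicit the bound $\mu_i\leq 2N/\sigma^2$, which the paper later invokes without proof in Lemma \ref{lem1}. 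One small simplification: since the GA recursion computes the exact mean at every node, the contraction $f$-map$(\mu)\leq\mu$ applied at the exact value already carries the induction, so the monotonicity of the $f$-map that you flag as the delicate point is not actually needed.
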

	\begin{proof}
		Let $\tilde{\mathscr{A}}=\left\{i\big| k_i \geq n\gamma\right\}$. Considering the recursion of $Z_n$ where $Z_{n}=Z_{n-1}^2$ if $B_{n}=1$ and $Z_{n} \geq Z_{n-1}$ if $B_{n}=0$, with $B_{n}=1$ denoting a $g$-operation in the $n$-th layer, we get:	
		\begin{flalign*}
			Z\left(W_N^{(i)}\right) \geq Z_{0}^{2^{n\gamma}}, \quad \forall i \in \mathscr{A}^c,
		\end{flalign*}
		where \(Z_{0}\) is the Bhattacharyya parameter of a BI-AWGN channel with noise \(\mathscr{N}(0, \sigma^{2})\). Given that $\gamma<\beta$, then, for sufficiently large $N$, $Z_0^{2^{n\gamma}}>2^{-N^\beta}$, which implies $\mathscr{A} \subset \tilde{\mathscr{A}}$.
		
		Consequently, $\sigma_i^2=2^{k_{i}}\sigma_{p}^{2} \geq 2^{n\gamma} \sigma_p^2\approx N^{\gamma-\alpha}$.
	\end{proof}
	Proposition \ref{prop1} allows us to derive the next property, termed the generalized GA for the $u$-side perturbation-enhanced SC decoder, which will be used to calculate (\ref{eqi})-(\ref{eqiii}).
	\begin{property}\label{pro1} For $\forall i\in\mathscr{A}$, we have:
		\begin{flalign*}
			\{L_{i},L_{i}+n_{u_{i}}\big| \hat{\mathbf{u}}^{(0)}_{0:i}=\mathbf{0},\hat{\mathbf{u}}^{(1)}_{0:i}=\mathbf{0} \}\sim (G_{1},G_{1}+G_{2}),
		\end{flalign*}
		where $G_{1}\sim\mathscr{N}(\mu_i,2\mu_i)$, $G_{2}\sim\mathscr{N}(0,\sigma_{i}^{2})$ independent of $G_{1}$.
	\end{property}
	
	The proposition described below illustrates the probability that errors cannot be corrected after one perturbation.
	\begin{proposition}\label{prop3}
		Let $\mu,\ \sigma>0$ and $L \sim \mathscr{N}(\mu, 2 \mu), n_L \sim \mathscr{N}\left(0, \sigma_L^2\right)$, with $\mu=\mathbb{E}[L]$. If $L$ and $n_{L}$ are independent, then:
		\begin{flalign*}
			\mathbb{P}(L+n_{L}<0\big| L<0)\leq \inf _{0<s<\frac{1}{2}}\left[\frac{1}{2}+s+\frac{2+\mu}{\sqrt{2 \mu}} e^{-\sqrt{\frac{\pi}{2}} \sigma_{L} s}\right] .
		\end{flalign*}
	\end{proposition}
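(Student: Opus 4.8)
The plan is to reduce the statement to a one-dimensional Gaussian comparison by conditioning on $L$, and then to split the conditioning event $\{L<0\}$ at a free threshold $c=c(s)$ chosen so that the two resulting pieces become precisely the $\tfrac12+s$ term and the exponential term in the asserted bound.

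Since $n_L$ is independent of $L$, we have $\mathbb{P}(L+n_L<0\mid L<0)=\mathbb{E}\!\left[\,\mathbb{P}(n_L<-L\mid L)\mid L<0\right]$. Fix $s\in(0,\tfrac12)$ and put $c:=\sqrt{2\pi}\,\sigma_L\,s>0$. On the event $\{-c\le L<0\}$ we have $-L\le c$, hence $\{n_L<-L\}\subseteq\{n_L<c\}$; on $\{L<-c\}$ we bound the indicator by $1$. By independence this gives the master inequality
\[
\mathbb{P}(L+n_L<0\mid L<0)\ \le\ \mathbb{P}(n_L<c)\ +\ \frac{\mathbb{P}(L<-c)}{\mathbb{P}(L<0)} .
\]
Bounding the $\mathscr{N}(0,\sigma_L^2)$ density by its maximal value $(\sqrt{2\pi}\,\sigma_L)^{-1}$ gives $\mathbb{P}(n_L<c)=\tfrac12+\int_0^c(\sqrt{2\pi}\,\sigma_L)^{-1}e^{-t^2/(2\sigma_L^2)}\,\mathrm{d}t\le \tfrac12+\tfrac{c}{\sqrt{2\pi}\,\sigma_L}=\tfrac12+s$, and the hypothesis $s<\tfrac12$ is exactly what keeps this quantity below $1$.

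For the second term, standardize $L\sim\mathscr{N}(\mu,2\mu)$: with $a:=\sqrt{\mu/2}$ and $b:=\tfrac{c+\mu}{\sqrt{2\mu}}=a+\tfrac{c}{\sqrt{2\mu}}>a$ one has $\mathbb{P}(L<-c)=\tfrac{1}{\sqrt{2\pi}}\mathrm{Q}(b)$ and $\mathbb{P}(L<0)=\tfrac{1}{\sqrt{2\pi}}\mathrm{Q}(a)$, so the ratio equals $\mathrm{Q}(b)/\mathrm{Q}(a)$. The elementary Gaussian tail estimates $\mathrm{Q}(b)\le \tfrac1b e^{-b^2/2}$ and $\mathrm{Q}(a)\ge \tfrac{a}{1+a^2}e^{-a^2/2}$ yield
\[
\frac{\mathrm{Q}(b)}{\mathrm{Q}(a)}\ \le\ \frac{1+a^2}{ab}\,e^{-\frac{b^2-a^2}{2}}\ =\ \frac{2+\mu}{c+\mu}\,e^{-\frac{b^2-a^2}{2}} ,
\]
using $ab=\tfrac{c+\mu}{2}$ and $1+a^2=\tfrac{2+\mu}{2}$. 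Now $b^2-a^2=\tfrac{(c+\mu)^2-\mu^2}{2\mu}=\tfrac{c^2}{2\mu}+c\ge c$, so $\tfrac{b^2-a^2}{2}\ge \tfrac{c}{2}=\sqrt{\pi/2}\,\sigma_L s$; and $\tfrac{2+\mu}{c+\mu}\le \tfrac{2+\mu}{\sqrt{2\mu}}$ whenever $c+\mu\ge\sqrt{2\mu}$, which holds in particular for $\mu\ge 2$ (the only regime relevant here, since $\mu=\mu_i\ge N^\beta$), the residual case $\mu<2$ needing only a routine separate estimate. Hence the second term is $\le \tfrac{2+\mu}{\sqrt{2\mu}}e^{-\sqrt{\pi/2}\,\sigma_L s}$.

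Adding the two bounds gives $\mathbb{P}(L+n_L<0\mid L<0)\le \tfrac12+s+\tfrac{2+\mu}{\sqrt{2\mu}}e^{-\sqrt{\pi/2}\,\sigma_L s}$ for every $s\in(0,\tfrac12)$, and taking the infimum over $s$ is exactly the claim. I expect the only real subtlety to be the constant bookkeeping: the choice $c=\sqrt{2\pi}\,\sigma_L s$ together with this precise pair of tail inequalities is essentially forced if one wants exactly the prefactor $\tfrac{2+\mu}{\sqrt{2\mu}}$ and the exponent $\sqrt{\pi/2}\,\sigma_L s$ (a cruder Chernoff-type bound inflates the prefactor by a factor $\sqrt{\pi/2}$), together with the small-$\mu$ corner where $c+\mu<\sqrt{2\mu}$ is possible and must be handled separately.
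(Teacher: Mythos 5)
Your proof is correct and follows essentially the same route as the paper's: split $\{L<0\}$ at the threshold $c=\sqrt{2\pi}\,\sigma_L s$ (the paper phrases this via $\mathrm{Q}^{-1}(t)\ge\sqrt{2\pi}(\tfrac12-t)$ with $t=\tfrac12-s$), bound the near-zero piece by $(\tfrac12+s)\mathbb{P}(L<0)$, and control the far tail through the same Gaussian sandwich bounds applied to the ratio $\mathrm{Q}(b)/\mathrm{Q}(a)$. The small-$\mu$ corner you flag is not a point of divergence either: the paper's own step asserting the prefactor $\frac{2+\mu}{\sqrt{2\mu}}$ implicitly needs $c+\mu\ge\sqrt{2\mu}$ as well, which is harmless in the only regime where the proposition is invoked ($\mu=\mu_i\ge N^{\beta}$), so your explicit acknowledgment of that case is, if anything, more careful than the original.
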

	\begin{proof}
		For any $s\in(0,\frac{1}{2})$, let $t=\frac{1}{2}-s\in (0,\frac{1}{2})$, we have:
		\begin{flalign*}
			&\mathbb{P}(L<0, L+n_{L}<0)=\mathbb{E}\left[\mathrm{Q}\left(\frac{L}{\sigma_{L}}\right) \mathbb{I}_{\{L<0\}}\right] \\
			=&\mathbb{E}\left[\mathrm{Q}\left(\frac{L}{\sigma_{L}}\right) \mathbb{I}_{\{-\mathrm{Q}^{-1}(t)<\frac{L}{\sigma_{L}}<0\}}+\mathrm{Q}\left(\frac{L}{\sigma_{L}}\right) \mathbb{I}_{\{\frac{L}{\sigma_{L}}<-\mathrm{Q}^{-1}(t)\}}\right] \\
			\leq& Q\left(-\mathrm{Q}^{-1}(t)\right) \mathbb{P}\left(-\sigma_{L} \mathrm{Q}^{-1}(t)<L<0\right)\\
			+&\mathbb{P}\left(L<-\sigma_{L} \sqrt{2 \pi}\left(\frac{1}{2}-t\right)\right) \\
			\leq&(1-t) \mathbb{P}(L<0)+\mathrm{Q}\left(\frac{\sigma_{L} \sqrt{2 \pi}\left(\frac{1}{2}-t\right)+\mu}{\sqrt{2 \mu}}\right) \\
			=&\left(\frac{1}{2}+s\right) \mathbb{P}(L<0)+\mathrm{Q}\left(\frac{\sqrt{2 \pi} \sigma_{L} s+\mu}{\sqrt{2 \mu}}\right),
		\end{flalign*}
		where $\mathrm{Q}^{-1}(x)\geq \sqrt{2\pi}(\frac{1}{2}-x)$ \cite{ref15} tells us the first inequality.
		
		Since $\frac{x}{\sqrt{2\pi}(1 + x^2)} e^{-\frac{x^2}{2}} \leq Q(x) \leq \frac{1}{\sqrt{2\pi} x} e^{-\frac{x^2}{2}}$ \cite{ref15}, we have:
		\begin{flalign*}
			\frac{\mathrm{Q}\left(\frac{\sqrt{2 \pi} \sigma_{L} s+\mu}{\sqrt{2 \mu}}\right)}{\mathbb{P}(L<0)}=\frac{\mathrm{Q}\left(\frac{\sqrt{2 \pi} \sigma_{L} s+\mu}{\sqrt{2 \mu}}\right)}{\mathrm{Q}(\sqrt{\frac{\mu}{2}})}\leq \frac{2+\mu}{\sqrt{2 \mu}} e^{-\sqrt{\frac{\pi}{2}} \sigma_{L} s},
		\end{flalign*}
		from which Proposition \ref{prop3} follows.
	\end{proof}
	From Proposition \ref{prop3}, we can further derive the upper bound of the probability that the first error position remains unchanged after one perturbation.
	\begin{lemma}\label{lem1}
		Let $i \in \mathscr{A}$, then for any $\gamma<\beta, \alpha\in(1-2\beta,\beta)$ and sufficiently large $N$, we obtain:
		\begin{flalign*}
			\mathbb{P}(\tau(1)=i\big| \tau(0)=i)
			\leq \inf _{0<s<\frac{1}{2}}\left[\frac{1}{2}+s+\frac{2 N}{\sigma^2}e^{-\sqrt{\frac{\pi}{2}}  N^{\frac{\gamma-\alpha}{2}}s} \right]. 
		\end{flalign*}
	\end{lemma}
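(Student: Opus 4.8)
The plan is to reduce the left‑hand side to the one–dimensional Gaussian estimate of Proposition~\ref{prop3} via Property~\ref{pro1}, and then to plug in the size bounds on $\sigma_i$ and $\mu_i$. Since the all‑zero codeword is sent, write $A=\{\hat{\mathbf{u}}^{(0)}_{0:i}=\mathbf{0}\}$ and $A'=\{\hat{\mathbf{u}}^{(1)}_{0:i}=\mathbf{0}\}$ for the events that the plain SC decoder and its perturbed version are correct on the first $i$ bits. By the definitions of $\tau(0),\tau(1)$ and of the hard decision, $\{\tau(0)=i\}=A\cap\{L_i<0\}$; and because on $A\cap A'$ the perturbed decision LLR of bit $i$ equals $L_i+n_{u_i}$ (this is exactly what Proposition~\ref{prop1} establishes and Property~\ref{pro1} records the joint law of), we have $\{\tau(1)=i\}\cap\{\tau(0)=i\}\subseteq A\cap A'\cap\{L_i<0\}\cap\{L_i+n_{u_i}<0\}$. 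Hence
\[
\mathbb{P}(\tau(1)=i\mid\tau(0)=i)\le\frac{\mathbb{P}\!\left(A\cap A'\cap\{L_i<0\}\cap\{L_i+n_{u_i}<0\}\right)}{\mathbb{P}\!\left(A\cap\{L_i<0\}\right)}.
\]

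Next I would evaluate the numerator and denominator by conditioning. Conditioning the numerator on $A\cap A'$ and invoking Property~\ref{pro1} replaces $(L_i,L_i+n_{u_i})$ by $(G_1,G_1+G_2)$ with $G_1\sim\mathscr{N}(\mu_i,2\mu_i)$ and $G_2\sim\mathscr{N}(0,\sigma_i^2)$ independent; conditioning the denominator on $A$ and invoking Assumption~\ref{asm1} replaces $L_i$ by $G_1$. Thus the ratio equals
\[
\frac{\mathbb{P}(G_1<0,\,G_1+G_2<0)}{\mathbb{P}(G_1<0)}\cdot\frac{\mathbb{P}(A\cap A')}{\mathbb{P}(A)}\le\mathbb{P}(G_1+G_2<0\mid G_1<0),
\]
where the last step simply drops the factor $\mathbb{P}(A\cap A')/\mathbb{P}(A)\le 1$. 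Applying Proposition~\ref{prop3} with $L=G_1$, $\mu=\mu_i>0$, $n_L=G_2$, $\sigma_L=\sigma_i>0$ then gives
\[
\mathbb{P}(\tau(1)=i\mid\tau(0)=i)\le\inf_{0<s<\frac12}\!\left[\frac12+s+\frac{2+\mu_i}{\sqrt{2\mu_i}}\,e^{-\sqrt{\frac{\pi}{2}}\,\sigma_i s}\right].
\]

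It remains to replace $\sigma_i$ and $\mu_i$ by explicit bounds. For the exponent, Proposition~\ref{prop2} gives $\sigma_i\ge N^{(\gamma-\alpha)/2}$ for $\gamma<\beta$ and $\sigma_p^2\approx N^{-\alpha}$, so $e^{-\sqrt{\pi/2}\,\sigma_i s}\le e^{-\sqrt{\pi/2}\,N^{(\gamma-\alpha)/2}s}$ since $s>0$. For the prefactor, I would note from the recursion in Assumption~\ref{asm1} that a $g$‑step exactly doubles the mean while an $f$‑step does not increase it, because $1-(1-\phi(x))^2\ge\phi(x)$ and $\phi$ is decreasing, so $\phi^{-1}(1-(1-\phi(x))^2)\le x$; iterating over the $n$ layers from $\mu_1^{(i)}=2/\sigma^2$ yields $\mu_i\le 2^n\cdot\frac{2}{\sigma^2}=\frac{2N}{\sigma^2}$. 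Since $i\in\mathscr{A}$ forces $\mu_i\ge N^{\beta}\ge 2$ for large $N$, and $\frac{2+\mu}{\sqrt{2\mu}}\le\mu$ whenever $\mu\ge2$, we get $\frac{2+\mu_i}{\sqrt{2\mu_i}}\le\mu_i\le\frac{2N}{\sigma^2}$. Substituting both bounds under the infimum (which only enlarges each bracket) produces the claimed inequality.

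The step that needs care is the reduction in the first two paragraphs: one must be sure that on $A\cap A'$ the perturbed hard‑decision LLR of bit $i$ is genuinely $L_i+n_{u_i}$ so that Property~\ref{pro1} applies verbatim, and that the common path‑correctness factor $\mathbb{P}(A\cap A')/\mathbb{P}(A)$ is discarded (not retained) when passing to the upper bound. After that everything is routine; the only mildly non‑obvious ingredient is the monotonicity of the $f$‑recursion used to cap $\mu_i$ by $2N/\sigma^2$, together with the elementary inequality $\frac{2+\mu}{\sqrt{2\mu}}\le\mu$ for $\mu\ge 2$.
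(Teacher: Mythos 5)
Your proposal is correct and follows essentially the same route as the paper: reduce $\mathbb{P}(\tau(1)=i\mid\tau(0)=i)$ to the conditional Gaussian bound of Proposition~\ref{prop3} via Property~\ref{pro1} (dropping the factor $\mathbb{P}(A\cap A')/\mathbb{P}(A)\le 1$), then insert $\sigma_i\ge N^{(\gamma-\alpha)/2}$ from Proposition~\ref{prop2} and $\mu_i\le 2N/\sigma^2$. The only differences are cosmetic refinements on your side — you justify $\mu_i\le 2N/\sigma^2$ via monotonicity of the GA recursion and use $\frac{2+\mu}{\sqrt{2\mu}}\le\mu$ for $\mu\ge 2$, where the paper simply asserts $\sqrt{2\mu_i}\ge 1$, $\mu_i\le 2N/\sigma^2$ and absorbs the leftover constant for large $N$.
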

	\begin{proof}
		As stated in Proposition \ref{prop3}, it follows that:
		\begin{flalign*}
			\begin{aligned}
				&\mathbb{P}(\tau(1)=i, \tau(0)=i) \\
				=&\mathbb{P}\left(\hat{u}_{i}^{(1)}=1, \hat{\mathbf{u}}_{0: i}^{(1)}=\mathbf{0}, \hat{u}_{i}^{(0)}=1, \hat{\mathbf{u}}_{0: i}^{(0)}=\mathbf{0}\right) \\
				=&\mathbb{P}\left(\hat{u}_{i}^{(0)}=1, \hat{u}_{i}^{(1)}=1 \big| \hat{\mathbf{u}}_{0: i}^{(0)}=\mathbf{0}, \hat{\mathbf{u}}_{0: i}^{(1)}=\mathbf{0}\right)\\ 
				&\mathbb{P}\left(\hat{\mathbf{u}}_{0: i}^{(0)}=\mathbf{0}, \hat{\mathbf{u}}_{0: i}^{(1)}=\mathbf{0}\right)\\
				\leq& \mathbb{P}\left(L_i<0, L_i+n_{u_{i}}<0\right) \mathbb{P}\left(\hat{\mathbf{u}}_{0: i}^{(0)}=\mathbf{0}\right)\leq \mathbb{P}\left(L_i<0\right)\\
				&\inf _{0<s<\frac{1}{2}}\left[\frac{1}{2}+s+\frac{2+\mu_i}{\sqrt{2 \mu_i}} e^{-\sqrt{\frac{\pi}{2}}\sigma_{i} s}\right]\mathbb{P}\left(\hat{\mathbf{u}}_{0: i}^{(0)}=\mathbf{0}\right).
			\end{aligned}
		\end{flalign*}
		
		Alternatively, Proposition \ref{prop2} infers that $\sigma_i^2 \geq N^{\gamma-\alpha}$. Also, for $\forall i \in \mathscr{A}, \sqrt{2 \mu_i} \geq 1$ and $\mu_i \leq 2 N / \sigma^2$. Notably, $\mathbb{P}(\tau(0)=i)=\mathbb{P}\left(L_i<0\right) \mathbb{P}\left(\hat{\mathbf{u}}_{0: i}^{(0)}=\mathbf{0}\right)$, and $2e^{-\sqrt{\frac{\pi}{2}}  N^{\frac{\gamma-\alpha}{2}}s}$ is small for sufficiently large $N$, thereby concluding the proof.
	\end{proof}

	To prove Lemma 2, we need the distribution of \(\{L_i \big| \hat{\mathbf{u}}_{0: i}^{(0)} = \mathbf{0}, \hat{\mathbf{u}}_{0: j}^{(1)} = \mathbf{0}\}\) for \(i, j \in \mathscr{A}\) with \(j < i\). Due to the complexity of calculating this distribution, we approximate it using the distribution of \(\{L_i \big| \hat{\mathbf{u}}_{0: i}^{(0)} = \mathbf{0}\}\). For large $N$, we assume $\sigma_{p}^{2}\approx N^{-\alpha}$. Using the law of total probability, we have: $\mathbb{P}(L_i \in A \big| \hat{\mathbf{u}}_{0: i}^{(0)} = \mathbf{0})=\mathbb{P}(L_i \in A,\hat{\mathbf{u}}^{(1)}_{0:j}\neq \mathbf{0}  \big| \hat{\mathbf{u}}_{0: i}^{(0)} = \mathbf{0})+\mathbb{P}(L_i\in A,\hat{\mathbf{u}}^{(1)}_{0:j}=\mathbf{0}  \big| \hat{\mathbf{u}}_{0: i}^{(0)} = \mathbf{0})$ where $\mathbf{u}^{(1)}_{0:j}\neq \mathbf{0}$ indicates that there is a $k\in\mathscr{A}$ and $k<j$, s.t., $\hat{u}^{(1)}_{k}=1$. Then, we can show that for any $ j < i$, and any Borel set \(A\), $|\mathbb{P}(L_i \in A \big| \hat{\mathbf{u}}_{0: i}^{(0)} = \mathbf{0}, \hat{\mathbf{u}}_{0: j}^{(1)} = \mathbf{0}) - \mathbb{P}(L_i \in A \big| \hat{\mathbf{u}}_{0: i}^{(0)} = \mathbf{0})| < 2^{-N^{\beta}}$ for some $\beta\in (\frac{1}{3},\frac{1}{2})$ for sufficiently large $N$. This leads to the following assumption. 
	\begin{assumption}\label{asum2}
	   For any $ i, j \in \mathscr{A} $ with $ j < i $, for any Borel set $ A $ and sufficiently large $ N $, with $\sigma_{p}^{2}\approx N^{-\alpha}$, we can estimate:
	   \begin{flalign*}
		   \mathbb{P}(L_i\in A \big| \hat{\mathbf{u}}_{0: i}^{(0)}=\mathbf{0}, \hat{\mathbf{u}}_{0: j}^{(1)}=\mathbf{0})\approx \mathbb{P}(L_i\in A \big| \hat{\mathbf{u}}_{0: i}^{(0)}=\mathbf{0}).
	   \end{flalign*}
    \end{assumption}
	Lemma 2 indicates that the perturbation decoder is unlikely to make an error if the first decoder didn't make an error. 
	\begin{lemma}\label{lem2}
		For any $i, j \in \mathscr{A}$ and $j<i$, the following applies:
		\begin{flalign*}
			\lim_{N\rightarrow \infty}\mathbb{P}(\hat{u}_{j}^{(1)}=0 \big| \tau(0)=i,\hat{\mathbf{u}}^{(1)}_{0:j}=\mathbf{0}) = 1.
		\end{flalign*}
	\end{lemma}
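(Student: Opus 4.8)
The plan is to bound the complementary conditional probability $\mathbb{P}(\hat u_j^{(1)}=1\mid \tau(0)=i,\hat{\mathbf{u}}^{(1)}_{0:j}=\mathbf{0})$ and show it tends to $0$ as $N\to\infty$. On the conditioning event we have $\hat{\mathbf{u}}^{(0)}_{0:j}=\mathbf{0}$ and $\hat u_j^{(0)}=0$ (because $\tau(0)=i$ with $j<i$ forces $\hat u^{(0)}_k=0$ for all $k\le j$), as well as $\hat{\mathbf{u}}^{(1)}_{0:j}=\mathbf{0}$; hence, by the definition of Algorithm 2 (cf. Proposition \ref{prop1}), the perturbed decision LLR of bit $j$ is $L_j+n_{u_j}$ with $n_{u_j}\sim\mathscr{N}(0,\sigma_j^2)$, and, since the all-zero codeword is transmitted, $\{\hat u_j^{(1)}=1\}=\{L_j+n_{u_j}<0\}$ while $L_j>0$ almost surely on the conditioning event. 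The noise $n_{u_j}$ is generated independently of the unperturbed decoder and of the first $j$ perturbed decisions (which involve only $n_{u_0},\dots,n_{u_{j-1}}$), so it is independent of $L_j$ and of the conditioning event; conditioning on $L_j$ first therefore gives
\begin{flalign*}
\mathbb{P}(\hat u_j^{(1)}=1\mid \tau(0)=i,\hat{\mathbf{u}}^{(1)}_{0:j}=\mathbf{0})=\mathbb{E}\left[\mathrm{Q}\left(\frac{L_j}{\sigma_j}\right)\,\Big|\,\tau(0)=i,\hat{\mathbf{u}}^{(1)}_{0:j}=\mathbf{0}\right].
\end{flalign*}

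I would then estimate this expectation by separating the typical and atypical ranges of $L_j$. Using $\mathrm{Q}(L_j/\sigma_j)\le \mathbb{I}_{\{L_j<\mu_j/2\}}+\mathrm{Q}\big(\mu_j/(2\sigma_j)\big)$, it suffices to show that $\mathrm{Q}(\mu_j/(2\sigma_j))$ and $\mathbb{P}(L_j<\mu_j/2\mid \tau(0)=i,\hat{\mathbf{u}}^{(1)}_{0:j}=\mathbf{0})$ both vanish. For the first term, $\mu_j\ge N^{\beta}$ (as $\mathscr{A}\subset\{Z(W_N^{(i)})<2^{-N^{\beta}}\}$) and $\sigma_j^2=2^{k_j}\sigma_p^2\le 2^{n}\sigma_p^2\approx N^{1-\alpha}$ give $\mu_j/(2\sigma_j)\ge \tfrac{1}{2} N^{\beta-(1-\alpha)/2}\to\infty$, since $\alpha>1-2\beta$; hence $\mathrm{Q}(\mu_j/(2\sigma_j))\to 0$ uniformly in $j\in\mathscr{A}$. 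For the second term, I would invoke Assumption \ref{asum2} together with the Gaussian approximation of Assumption \ref{asm1}: conditioning on $\hat{\mathbf{u}}^{(1)}_{0:j}=\mathbf{0}$ on top of $\{\tau(0)=i\}$ alters the law of $L_j$ by at most $2^{-N^{\beta}}$, and under the Gaussian approximation the law of $L_j$ given that the unperturbed decoder is correct through bit $i$ is, up to a truncation to $(0,\infty)$ of negligible mass $\mathrm{Q}(\sqrt{\mu_j/2})$, equal to $\mathscr{N}(\mu_j,2\mu_j)$; consequently $\mathbb{P}(L_j<\mu_j/2\mid \tau(0)=i,\hat{\mathbf{u}}^{(1)}_{0:j}=\mathbf{0})\le 2\,\mathrm{Q}(\sqrt{\mu_j/8})+2^{-N^{\beta}}\to 0$ because $\mu_j\to\infty$. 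Combining these two estimates with the displayed identity completes the argument.

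The step I expect to be the main obstacle is the second one: transferring the Gaussian-approximation law of $L_j$---which Assumption \ref{asm1} states under the mild event that the unperturbed decoder is correct only up to bit $j$---to the conditional law under the strictly more restrictive event $\{\tau(0)=i,\hat{\mathbf{u}}^{(1)}_{0:j}=\mathbf{0}\}$. The delicate point is that $\{\tau(0)=i\}$ contains the rare event $\{\hat u_i^{(0)}=1\}=\{L_i<0\}$, so one must argue that a rare downward deviation of the bit-$i$ LLR does not drag down the bit-$j$ LLR; this is exactly the (conditional) near-independence across bit positions that Assumption \ref{asum2} is introduced to license, applied here with zero-prefix lengths $i$ for the unperturbed and $j$ for the perturbed decoder. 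Everything else amounts to routine Gaussian tail estimates, and no lower bound on $\sigma_j^2$ (Proposition \ref{prop2}) is needed here, only the upper bound $\sigma_j^2\le N^{1-\alpha}$.
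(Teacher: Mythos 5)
Your route differs from the paper's, and it has a genuine gap at exactly the step you flagged. The identity $\mathbb{P}(\hat u_j^{(1)}=1\mid \tau(0)=i,\hat{\mathbf{u}}^{(1)}_{0:j}=\mathbf{0})=\mathbb{E}[\mathrm{Q}(L_j/\sigma_j)\mid\cdot]$ and the tail estimate $\mathrm{Q}(\mu_j/(2\sigma_j))\to 0$ (via $\mu_j\ge N^{\beta}$, $\sigma_j^2\le N^{1-\alpha}$, $\alpha>1-2\beta$) are fine. But the remaining piece, $\mathbb{P}(L_j<\mu_j/2\mid \tau(0)=i,\hat{\mathbf{u}}^{(1)}_{0:j}=\mathbf{0})\to 0$, is not licensed by Assumption \ref{asum2}. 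That assumption controls the conditional law of $L_i$ — the \emph{later} bit's LLR — under the conditioning $\{\hat{\mathbf{u}}^{(0)}_{0:i}=\mathbf{0},\hat{\mathbf{u}}^{(1)}_{0:j}=\mathbf{0}\}$; it says nothing about the law of the \emph{earlier} LLR $L_j$ when you additionally condition on the future error event $\{\hat u_i^{(0)}=1\}=\{L_i<0\}$, and it cannot be upgraded to that statement without a joint (approximate conditional independence) version of the assumption: writing your target by Bayes, you would need $\mathbb{P}(L_i<0\mid \hat{\mathbf{u}}^{(0)}_{0:i}=\mathbf{0},\hat{\mathbf{u}}^{(1)}_{0:j}=\mathbf{0}, L_j\in A)\approx\mathbb{P}(L_i<0\mid \hat{\mathbf{u}}^{(0)}_{0:i}=\mathbf{0},\hat{\mathbf{u}}^{(1)}_{0:j}=\mathbf{0})$, a statement about the joint law of $(L_j,L_i)$ that neither Assumption \ref{asm1} nor Assumption \ref{asum2} supplies. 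A crude bound does not rescue it either: dividing $\mathbb{P}(L_j<\mu_j/2)\approx\mathrm{Q}(\sqrt{\mu_j/8})$ by $\mathbb{P}(\tau(0)=i,\ldots)$ fails because the conditioning event contains the rare error $\{L_i<0\}$, whose probability can be far smaller than the numerator when $\mu_i\gg\mu_j$. Likewise, Assumption \ref{asm1} gives the law of $L_j$ only given correctness through bit $j$, not through bit $i$, so even your ``truncation to $(0,\infty)$'' step presupposes an uncontrolled transfer.

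The paper's proof is structured precisely to avoid needing any such statement about $L_j$: it applies Bayes to write $\mathbb{P}(\hat u_j^{(1)}=0\mid\tau(0)=i,\hat{\mathbf{u}}^{(1)}_{0:j}=\mathbf{0})$ as a ratio in which the rare factor $\mathbb{P}(\hat u_i^{(0)}=1\mid \hat{\mathbf{u}}^{(0)}_{0:i}=\mathbf{0},\hat{\mathbf{u}}^{(1)}_{0:\cdot}=\mathbf{0})$ appears in both numerator and denominator and is cancelled using Assumption \ref{asum2} in exactly the form stated (the conditioned variable is $L_i$, a Borel event $\{L_i<0\}$). What remains is the ratio $\mathbb{P}(\hat{\mathbf{u}}^{(0)}_{0:j}=\mathbf{0},\hat{\mathbf{u}}^{(1)}_{0:j+1}=\mathbf{0})/\mathbb{P}(\hat{\mathbf{u}}^{(0)}_{0:j}=\mathbf{0},\hat{\mathbf{u}}^{(1)}_{0:j}=\mathbf{0})$, which is lower-bounded by the joint block-correctness probability $\mathbb{P}(\hat{\mathbf{u}}^{(0)}_{0:N}=\mathbf{0}_N,\hat{\mathbf{u}}^{(1)}_{0:N}=\mathbf{0}_N)\ge \prod_{k\in\mathscr{A}}\mathbb{P}(L_k+n_{u_k}>0)-2^{-N^{\beta}}$, and then handled by the same kind of Gaussian tail estimates you use (here only the upper bound $\sigma_k^2\le N^{1-\alpha}$ is needed, as in your proposal). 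To repair your argument you would have to either adopt a strengthened joint version of Assumption \ref{asum2} covering the law of $L_j$ under the bit-$i$ error conditioning, or fall back on the paper's Bayes-cancellation device.
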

	\begin{proof}
		Employing the Bayesian formula and the conditional probability formula, we achieve for all $ j \in \mathscr{A} $ with $ j < i $:
		\begin{flalign*}
			\begin{aligned}
				&\mathbb{P}(\hat{u}_j^{(1)}=0 \big| \tau(0)=i,\hat{\mathbf{u}}^{(1)}_{0:j}=\mathbf{0})
				=\frac{\mathbb{P}(\tau(0)=i,\hat{\mathbf{u}}^{(1)}_{0:j+1}=\mathbf{0})}{\mathbb{P}(\tau(0)=i,\hat{\mathbf{u}}^{(1)}_{0:j}=\mathbf{0})} \\
				=& \frac{\mathbb{P}(\hat{\mathbf{u}}^{(0)}_{0:j}=\mathbf{0},\hat{\mathbf{u}}^{(1)}_{0:j+1}=\mathbf{0}\big| \hat{u}^{(0)}_{i} =1, \hat{\mathbf{u}}^{(0)}_{j:i} =\mathbf{0} )}{\mathbb{P}(\hat{\mathbf{u}}^{(0)}_{0:j}=\mathbf{0},\hat{\mathbf{u}}^{(1)}_{0:j}=\mathbf{0}\big| \hat{u}^{(0)}_{i} =1, \hat{\mathbf{u}}^{(0)}_{j:i} =\mathbf{0} )}\\
				=& \frac{\mathbb{P}(\hat{u}^{(0)}_i =1\big|\hat{\mathbf{u}}^{(0)}_{0:i}=\mathbf{0},\hat{\mathbf{u}}^{(1)}_{0:j+1}=\mathbf{0})\mathbb{P}(\hat{\mathbf{u}}^{(0)}_{0:j}=\mathbf{0},\hat{\mathbf{u}}^{(1)}_{0:j+1}=\mathbf{0})}{\mathbb{P}(\hat{u}^{(0)}_i=1\big|\hat{\mathbf{u}}^{(0)}_{0:i}=\mathbf{0},\hat{\mathbf{u}}^{(1)}_{0:j}=\mathbf{0})\mathbb{P}(\hat{\mathbf{u}}^{(0)}_{0:j}=\mathbf{0},\hat{\mathbf{u}}^{(1)}_{0:j}=\mathbf{0})}\\
				\overset{(a)}{\approx}&\frac{\mathbb{P}(\hat{\mathbf{u}}^{(0)}_{0:j}=\mathbf{0},\hat{\mathbf{u}}^{(1)}_{0:j+1}=\mathbf{0})}{\mathbb{P}(\hat{\mathbf{u}}^{(0)}_{0:j}=\mathbf{0},\hat{\mathbf{u}}^{(1)}_{0:j}=\mathbf{0})}\geq \mathbb{P}(\hat{\mathbf{u}}^{(0)}_{0:N}=\mathbf{0}_{N},\hat{\mathbf{u}}^{(1)}_{0:N}=\mathbf{0}_{N})\\
			\end{aligned}
		\end{flalign*}
		\begin{flalign*}
			\begin{aligned}
				\geq& 1-\mathbb{P}(\hat{\mathbf{u}}^{(0)}_{0:N} \neq \mathbf{0}_{N})-\mathbb{P}(\hat{\mathbf{u}}^{(1)}_{0:N} \neq \mathbf{0}_{N})\\
				\geq& \prod_{k \in \mathscr{A}} \mathbb{P}(L_k+n_{u_{k}}>0)-2^{-N^\beta},
			\end{aligned}
		\end{flalign*}
		where \(L_k \sim \mathscr{N}(\mu_k, 2\mu_k)\) and $(a)$ follows from Assumption \ref{asum2} while 
		\(\mathbb{P}(\hat{\mathbf{u}}^{(0)}_{0:N} \neq \mathbf{0}_N) \leq 2^{-N^\beta}\) \cite{ref1} yields the last inequality. 
		
		Note that for $\forall x>0$, $\mathrm{Q}(x)\leq e^{-\frac{x^{2}}{2}}< 2^{-\frac{x^{2}}{2}}$ \cite{ref15}, we have:
		\begin{flalign*}
			\mathbb{P}\left(L_k+n_{u_{k}}>0\right)=\mathrm{Q}(-\frac{\mu_k}{\sqrt{2 \mu_k+\sigma_k^2}})\geq 1-2^{-\frac{\mu_k^{2}}{2(2\mu_k+\sigma_k^{2})}}.
		\end{flalign*}
		
		Proposition \ref{prop1} implies \(\sigma_{k}^{2} \leq N^{1-\alpha}\) for large $N$, leading to $\frac{\mu_k^{2}}{2\mu_k+\sigma_k^{2}}\geq \frac{1}{2N^{-\beta}+N^{1-\alpha-2\beta}}\geq\frac{1}{3N^{1-\alpha-2\beta}}$, thus we have:
		\begin{flalign*}
			&\mathbb{P}\left(\hat{u}_j^{(1)}=0 \big| \tau(0)=i,\hat{\mathbf{u}}^{(1)}_{0:j}=\mathbf{0}\right)\geq \left(1-2^{-\frac{N^{\alpha+2 \beta-1}}{6}}\right)^N-\\
			&2^{-N^\beta}\overset{(b)}{\geq} 1-N2^{-\frac{N^{\alpha+2 \beta-1}}{6}}-2^{-N^\beta} \overset{(c)}{\geq} 1- 3N 2^{-\frac{N^{\alpha+2 \beta-1}}{6}},
		\end{flalign*}
		where (b) follows from the Bernoulli’s inequality \cite{ref15} and \(2^{\frac{N^{\alpha+2\beta-1}}{6}} \leq 2N 2^{N^{\beta}}\) yields (c), thus proving Lemma \ref{lem2}.
	\end{proof}
	The following theorem indicates that, with $\sigma_{p}^{2}\approx N^{-\alpha}$, asymptotically, both the delay probability and the unchanged probability of the first error position converge to $\frac{1}{2}$.
	\begin{theorem}\label{thm1}
		For any sufficiently large $N$, we achieve:
		\begin{flalign}
			&\lim_{N\rightarrow \infty} \mathbb{P}(\tau(1)>\tau(0)\big| \tau(0)<N) = \frac{1}{2},\\
			&\lim_{N\rightarrow \infty} \mathbb{P}(\tau(1)=\tau(0)\big| \tau(0)<N) = \frac{1}{2}.
		\end{flalign}
	\end{theorem}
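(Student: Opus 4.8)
The plan is to condition on the value of the first error position $\tau(0)=i$ (which lies in $\mathscr{A}$ since frozen bits are never in error) and to decompose the event $\{\tau(0)<N\}$ as the disjoint union $\bigcup_{i\in\mathscr{A}}\{\tau(0)=i\}$. On each such event we compare $\tau(1)$ to $i$. Since $\hat{\mathbf{u}}^{(1)}_{0:i}$ and $\hat{\mathbf{u}}^{(0)}_{0:i}$ agree with $\mathbf{0}$ up to the point where the perturbed decoder may first diverge, the key is to understand the three sub-events $\{\tau(1)>i\}$, $\{\tau(1)=i\}$, $\{\tau(1)<i\}$ conditioned on $\{\tau(0)=i\}$. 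First I would handle $\{\tau(1)<i\}$: by Lemma \ref{lem2}, applied inductively over $j=0,1,\dots$ through the information indices below $i$, the probability that the perturbed decoder makes an error at some position $j<i$ given $\tau(0)=i$ is at most a sum of $O(N\cdot 2^{-N^{\alpha+2\beta-1}/6})$ terms (one per index in $\mathscr{A}$ with $j<i$), hence vanishes as $N\to\infty$. Therefore $\mathbb{P}(\tau(1)<\tau(0)\mid\tau(0)<N)\to 0$, i.e. (\ref{eqiii}) is asymptotically negligible and (\ref{eqi}) and (\ref{eqii}) must sum to $1$ in the limit.

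Next I would pin down $\{\tau(1)=i\}$ versus $\{\tau(1)>i\}$ conditioned on $\{\tau(0)=i\}$ and on the (asymptotically full-probability) event that the perturbed decoder has committed no error before position $i$. On this event, by Property \ref{pro1}, the pair $(L_i, L_i+n_{u_i})$ is distributed as $(G_1, G_1+G_2)$ with $G_1\sim\mathscr{N}(\mu_i,2\mu_i)$ and $G_2\sim\mathscr{N}(0,\sigma_i^2)$ independent. The event $\{\tau(0)=i\}$ forces $L_i<0$ (equivalently $G_1<0$), and then $\{\tau(1)=i\}$ is essentially $\{L_i+n_{u_i}<0\}=\{G_1+G_2<0\}$ while $\{\tau(1)>i\}$ is $\{G_1+G_2>0\}$ (together with the no-earlier-error event from Lemma \ref{lem2}). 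By Lemma \ref{lem1}, $\mathbb{P}(\tau(1)=i\mid\tau(0)=i)\le \tfrac12+s+\tfrac{2N}{\sigma^2}e^{-\sqrt{\pi/2}\,N^{(\gamma-\alpha)/2}s}$ for every $s\in(0,\tfrac12)$; choosing, say, $s=N^{-(\gamma-\alpha)/4}$ (which is $o(1)$ and positive once $N$ is large, since $\gamma-\alpha>0$ can be arranged by taking $\gamma$ close to $\beta$ and using $\alpha<\beta$), the exponential term is killed and the bound tends to $\tfrac12$. Hence $\limsup_N \mathbb{P}(\tau(1)=i\mid\tau(0)=i)\le\tfrac12$ uniformly in $i\in\mathscr{A}$.

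For the matching lower bound I would argue that, given $G_1<0$, the conditional probability of $\{G_1+G_2>0\}$ is also at most $\tfrac12+o(1)$: heuristically, adding an independent zero-mean Gaussian to a negative value only rarely pushes it positive unless $\sigma_i$ is comparable to $|G_1|$, and since $|G_1|$ is typically of order $\mu_i\gtrsim N^\beta$ while $\sigma_i^2\le N^{1-\alpha}$, the relevant ratio is controlled by $\alpha+2\beta-1>0$; symmetrically, conditioning on $G_1<0$, the sign flip $\{G_1+G_2>0\}$ and no-flip $\{G_1+G_2<0\}$ split the mass so that neither can exceed $\tfrac12$ asymptotically. Combining the two $\le\tfrac12+o(1)$ bounds with $\mathbb{P}(\tau(1)=i\mid\tau(0)=i)+\mathbb{P}(\tau(1)>i\mid\tau(0)=i)=1-o(1)$ forces each to converge to $\tfrac12$. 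Finally, averaging over $i$ with weights $\mathbb{P}(\tau(0)=i\mid\tau(0)<N)$ (a probability distribution on $\mathscr{A}$) and invoking the uniform convergence gives $\mathbb{P}(\tau(1)>\tau(0)\mid\tau(0)<N)\to\tfrac12$ and $\mathbb{P}(\tau(1)=\tau(0)\mid\tau(0)<N)\to\tfrac12$.

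The main obstacle I anticipate is making the split "$\{G_1+G_2>0\mid G_1<0\}$ has probability $\tfrac12-o(1)$" rigorous and uniform in $i\in\mathscr{A}$ — that is, obtaining the lower bound counterpart of Lemma \ref{lem1} rather than just the upper bound. One has to control the regime where $G_1$ is very close to $0$ (where the flip probability approaches $\tfrac12$ from below is not obvious) against the regime where $|G_1|$ is large (where flips are exponentially unlikely), and show the tradeoff is governed by the same exponent $N^{(\gamma-\alpha)/2}$ or $N^{(\alpha+2\beta-1)/2}$ appearing elsewhere. The other bookkeeping point requiring care is that Lemma \ref{lem2} must be chained over all $j\in\mathscr{A}$ below $i$ without the error accumulating to something non-negligible, which is fine because $|\mathscr{A}|\le N$ and each term is doubly-exponentially small, so the union bound costs only a factor $N$.
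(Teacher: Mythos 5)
Your overall skeleton is the same as the paper's: decompose over $\{\tau(0)=i\}$, kill the advance probability with Lemma \ref{lem2}, bound the unchanged probability by $\tfrac12+o(1)$ via Lemma \ref{lem1} with a vanishing choice of $s$, and obtain the complementary $\le\tfrac12$ bound on the delay probability so that both limits are forced to equal $\tfrac12$. However, the step you yourself flag as the "main obstacle" — showing $\mathbb{P}(\tau(1)>\tau(0)\,\big|\,\tau(0)=i)\le\tfrac12+o(1)$ — is a genuine gap as written, and the heuristic you offer to fill it is wrong. You argue that "adding an independent zero-mean Gaussian to a negative value only rarely pushes it positive" because "$|G_1|$ is typically of order $\mu_i\gtrsim N^\beta$ while $\sigma_i^2\le N^{1-\alpha}$." This conflates the unconditional scale of $G_1$ with its scale conditioned on the error event: given $G_1<0$ (a deep-tail event for $\mathscr{N}(\mu_i,2\mu_i)$), the overshoot $|G_1|$ is typically $O(1)$, far smaller than $\sigma_i\ge N^{(\gamma-\alpha)/2}\to\infty$. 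If your heuristic were correct the sign flip would be rare, the delay probability would tend to $0$, and the theorem you are proving would be false; in fact it is precisely because $\sigma_i\gg|G_1|$ on the conditioning event that the perturbed sign behaves like a fair coin.

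The missing idea that closes this cleanly — and is what the paper does — requires no asymptotics, no conditioning analysis of $|G_1|$, and no uniformity argument: use the inclusion $\{L_i<0,\ L_i+n_{u_i}>0\}\subset\{L_i<0,\ n_{u_i}>0\}$ together with the independence of $n_{u_i}$ and $L_i$ from Property \ref{pro1}, which gives $\mathbb{P}(\tau(1)>i,\ \tau(0)=i)\le\mathbb{P}(n_{u_i}>0)\,\mathbb{P}(L_i<0)\,\mathbb{P}(\hat{\mathbf{u}}^{(0)}_{0:i}=\mathbf{0})=\tfrac12\,\mathbb{P}(\tau(0)=i)$, hence the exact bound $\mathbb{P}(\tau(1)>\tau(0)\,\big|\,\tau(0)<N)\le\tfrac12$ after summing over $i\in\mathscr{A}$. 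With that in place, your remaining steps (Lemma \ref{lem1} with $s\to0$, and Lemma \ref{lem2} chained over $j<i$ — your union bound is an acceptable substitute for the paper's product bound) do yield the two limits. So the route is the paper's route; the one step you could not make rigorous is exactly the one the paper resolves, and it is resolved by this elementary inclusion-plus-independence observation rather than by the tail analysis you sketch.
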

	\begin{proof}
		Note that \(\{L_i < 0, L_i + n_{u_i} > 0\} \subset \{L_i < 0, n_{u_i} > 0\}\). Based on Property \ref{pro1} and applying the chain rule of conditional probability, we can deduce the following:
		\begin{flalign*}
			\begin{aligned}
				& \mathbb{P}(\tau(1)>\tau(0) \big| \tau(0)<N)=\frac{\sum_{i \in \mathscr{A}} \mathbb{P}(\tau(1)>i, \tau(0)=i)}{\mathbb{P}(\tau(0)<N)} \\
				=& \frac{\sum_{i \in \mathscr{A}} \mathbb{P}\left(L_i<0, L_i+n_{u_{i}}>0\right) \mathbb{P}\left(\hat{\mathbf{u}}_{0: i}^{(0)}=\mathbf{0},\hat{\mathbf{u}}_{0: i}^{(1)}=\mathbf{0}\right)}{\mathbb{P}(\tau(0)<N)} \\
				\leq& \frac{\sum_{i \in \mathscr{A}} \mathbb{P}(L_i<0, n_{u_{i}}>0) \mathbb{P}(\hat{\mathbf{u}}_{0: i}^{(0)}=\mathbf{0})}{\mathbb{P}(\tau(0)<N)} \\
				=&\frac{\sum_{i \in \mathscr{A}} \mathbb{P}\left(n_{u_{i}}>0\right) \mathbb{P}\left(L_i<0\right) \mathbb{P}(\hat{\mathbf{u}}_{0: i}^{(0)}=\mathbf{0})}{\mathbb{P}(\tau(0)<N)} \\
				=&\frac{1}{2} \frac{\sum_{i \in \mathscr{A}} \mathbb{P}(\tau(0)=i)}{\mathbb{P}(\tau(0)<N)}=\frac{1}{2}.
			\end{aligned}
		\end{flalign*}
		
		Utilizing Lemma \ref{lem1} and Lemma \ref{lem2} with $\gamma=\frac{\alpha+\beta}{2}$, we find:
		\begin{flalign*}
			\begin{aligned}
				& \mathbb{P}(\tau(1)>\tau(0) \big| \tau(0)<N)\\
				=&\frac{\sum_{i \in \mathscr{A}} \mathbb{P}(\tau(1)>\tau(0) \big| \tau(0)=i) \mathbb{P}(\tau(0)=i)}{\mathbb{P}(\tau(0)<N)} \\
				\geq& \frac{\sum_{i \in \mathscr{A}} \mathbb{P}(\tau(0)=i)(1-\mathbb{P}(\tau(1)=i \big| \tau(0)=i)) }{\mathbb{P}(\tau(0)<N)}\\
				\times&\prod_{j<i, j \in \mathscr{A}} \mathbb{P}\left(\hat{u}_j^{(1)}=0 \big| \tau(0)=i,\hat{\mathbf{u}}^{(1)}_{0:j}=\mathbf{0}\right)\\
				\geq& \sup _{0<s<\frac{1}{2}}\left[\frac{1}{2}-s-\frac{2 N}{\sigma^2}e^{-\sqrt{\frac{\pi}{2}} N^{\frac{\beta-\alpha}{4}}  s}\right] (1-3 N^2 2^{-\frac{N^{\alpha+2 \beta-1}}{6}}).
			\end{aligned}
		\end{flalign*}
		
		Letting $N\rightarrow +\infty$ and then allowing $s\rightarrow 0$, we determine:
		\begin{flalign*}
			\lim_{N\rightarrow \infty}\mathbb{P}(\tau(1)>\tau(0) \big| \tau(0)<N)= \frac{1}{2}.
		\end{flalign*}
		
		Moreover, Lemma \ref{lem2} yields that
		\begin{flalign*}
			\lim_{N\rightarrow \infty}\mathbb{P}(\tau(1)<\tau(0)\big|\tau(0)<N) = 0.
		\end{flalign*}
		
		As a result, the proof of Theorem \ref{thm1} is complete.
	\end{proof}
	Theorem \ref{thm1} shows that perturbation ensures the first error bit will either stay in the same position or shift to a later one, each with a probability of $\frac{1}{2}$. This implies that perturbation either maintains or improves the SC decoder's performance.
	\section{Simulation and discussions}
\begin{figure*}[htbp]
	\centering
	\subfigure[BLER performance comparisons between different perturbation-enhanced SC decoders: \( y \)-side (Algorithm 1) and \( u \)-side (Algorithm 2) for a code \((N, K+crc)=(1024, 512+24)\) with maximum attempts \( T = \{1, 5, 10\} \) and CRC generator polynomial \( g(x) = x^{24} + x^{23} + x^{6} + x^{5} + x + 1 \).]{
		\includegraphics[width=2.7in]{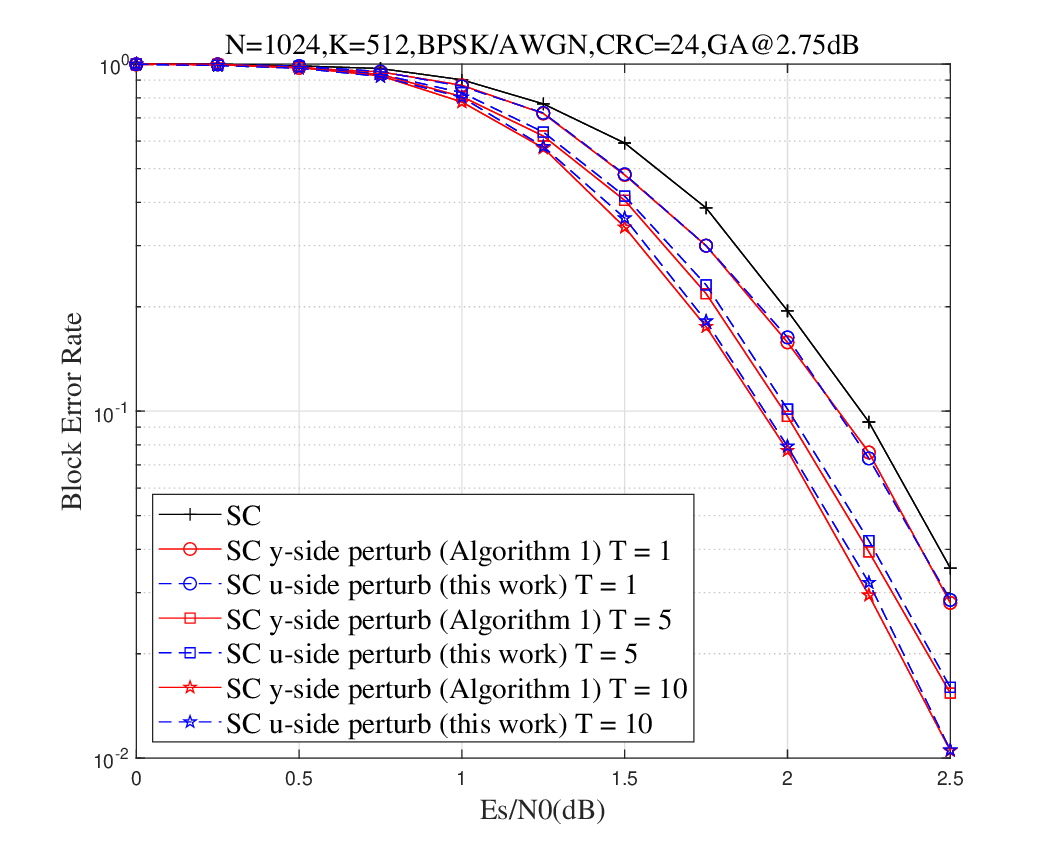}
	}
	\hspace{0.2cm}
	\subfigure[The delay, unchanged, and advance probabilities (\ref{eqi}), (\ref{eqii}), (\ref{eqiii}) for code lengths \( N = \{1024, 2048, 4096, 8192, 16384\} \), code rate \( R = \frac{1}{2} \), and design SNRs \( \{2.75, 2.5, 2.25, 2, 1.85\} \) dB, with a fixed BLER of \( 10^{-2} \), using \( u \)-side perturbation-enhanced SC decoding with \( T = 1 \).]{
		\includegraphics[width=2.9in]{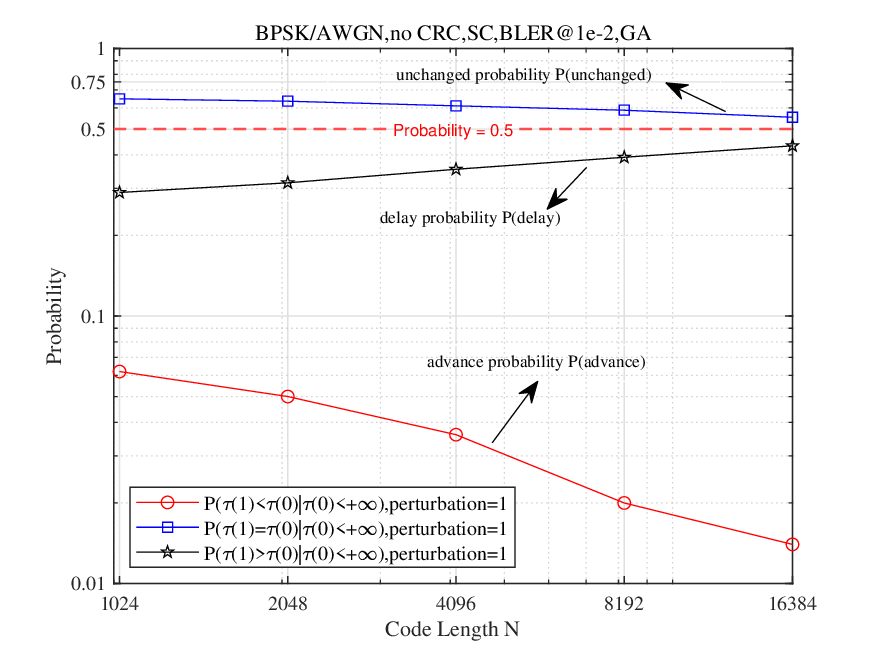}
	}
	\caption{Performance analysis of two perturbation-enhanced SC decoders: $y$-side (Algorithm 1) vs $u$-side (Algorithm 2).}
	\label{fig1}
\end{figure*}
	All simulations utilize perturbation methods from Algorithm 1 and Algorithm 2 in a BI-AWGN channel with perturbation noise \(\sigma_{p}^{2}=10^{-\frac{\text{SNR}-0.1}{10}}-\sigma^{2}\) \cite{ref13}, where SNR is the signal-to-noise ratio, consistently across all code lengths. Decoding is performed until 400 errors are detected for each code length.
	
	In Fig. \ref{fig1} (a), the red solid line and the blue dashed line represent the performances of the $y$-side (Algorithm 1) and $u$-side (Algorithm 2) perturbation-enhanced SC decoders, respectively. Their nearly identical performance supports further exploration of perturbed decision LLRs in SC decoders.
	
	Fig. \ref{fig1} (b) shows that as \(N\) increases, the delay probability of the \(u\)-side perturbation-enhanced SC decoder tends to \(\frac{1}{2}\), and the advance probability decreases to 0, indicating that perturbation either preserves or improves the SC decoder's performance with probability \(\frac{1}{2}\).

	\section{conclusions}
	In this paper, we analyze the delay probability of the first error position and show that adding perturbation does not degrade SC decoder's performance. Our results indicate that the delay probability asymptotically approaches $\frac{1}{2}$, suggesting a potential performance improvement with probability $\frac{1}{2}$. Future research will explore the effects of different perturbation powers on delay probability for a fixed $N$ and develop methods for selecting the optimal perturbation power \(\sigma_p^2\).

	{\appendices

	\end{document}